\newcommand{\R}{\mathbb{R}}
\newcommand{\C}{\mathbb{C}}
\newcommand{\fn}{{\,\mathfrak{n}}}
\newcommand{\fz}{\mathfrak{z}}
\newcommand{\bn}{{\mathbf{n}}}
\newcommand{\bI}{\mathbf{I}}
\newcommand{\bJ}{\mathbf{J}}
\newcommand{\bK}{\mathbf{K}}
\newcommand{\bM}{\mathbf{M}}
\newcommand{\cE}{\mathcal{E}}
\newcommand{\cN}{\mathcal{N}}
\newcommand{\cP}{\mathcal{P}}
\newcommand{\cQ}{\mathcal{Q}}
\newcommand{\cR}{\mathcal{R}}
\newcommand{\cS}{\mathcal{S}}
\newcommand{\cT}{\mathcal{T}}
\newcommand{\cX}{\mathcal{X}}
\newcommand{\be}{\begin{equation}}
\newcommand{\ee}{\end{equation}}
\newcommand{\bea}{\begin{eqnarray}}
\newcommand{\eea}{\end{eqnarray}}
\newcommand{\nn}{\nonumber}
\newcommand{\kt}{\rangle}
\newcommand{\bra}{\langle}
\newcommand{\ed}{\end{document}}
\newcommand{\bi}{\begin{itemize}}
\newcommand{\ei}{\end{itemize}}
\newcommand{\bce}{\begin{center}}
\newcommand{\ece}{\end{center}}
\newcommand{\sA}{\mathscr{A}}
\newcommand{\sE}{\mathscr{E}}
\newcommand{\sH}{\mathscr{H}}
\newcommand{\etap}{{\eta_{_+}}}
 \newtheorem{thm}{Theorem} 
 \theoremstyle{definition}
 \newtheorem{defn}{Definition}
 \theoremstyle{remark}
\begin{document}

%
%
%
%
%
%
%
%
%

\title{Physics of Spectral Singularities}

\author{Ali Mostafazadeh}

\address{Departments of Mathematics and Physics\\
Ko\c{c} University \\
Sar{\i}yer 34450, Istanbul\\
Turkey}

\email{amostafazadeh@ku.edu.tr}

\date{~}

\begin{abstract}

Spectral singularities are certain points of the continuous spectrum of generic complex scattering potentials. We review the recent developments leading to the discovery of their physical meaning,  consequences, and generalizations. In particular, we give a simple definition of spectral singularities, provide a general introduction to spectral consequences of $\cP\cT$-symmetry (clarifying some of the controversies surrounding this subject), outline the main ideas and constructions used in the pseudo-Hermitian representation of quantum mechanics, and discuss how spectral singularities entered in the physics literature as obstructions to these constructions. We then review the transfer matrix formulation of scattering theory and the application of complex scattering potentials in optics. These allow us to elucidate the physical content of spectral singularities and describe their optical realizations. Finally, we survey some of the most important results obtained in the subject, drawing special attention to the remarkable fact that the condition of the existence of linear and nonlinear optical spectral singularities yield simple mathematical derivations of some of the basic results of laser physics, namely the laser threshold condition and the linear dependence of the laser output intensity on the gain coefficient.

\end{abstract}

\subjclass{34L25, 47A40, 78A60}
\keywords{Spectral singularity, complex potential, scattering, zero-width resonance, $\cP\cT$-symmetry, pseudo-Hermitian operator, laser, antilaser}

\maketitle

\section{Introduction}

The term `spectral singularity' entered mathematical literature through the work of Jack Schwartz \cite{schwartz} who coined this name for a mathematical object discovered by Mark Aronovich Naimark in 1954 \cite{naimark}. Naimark had come across spectral singularities and worked out some of their consequences in his attempts at generalizing the well-known spectral theory of self-adjoint Schr\"odinger operators,
    \be
    H=-\frac{d^2}{dx^2}+v(x),
    \label{sch-op}
    \ee
defined on the half-line, i.e., for $x\in [0,\infty)$, to situations where $v(x)$ was a complex scattering potential. This marks the starting point of a comprehensive study of spectral singularities that has attracted the attention of mathematicians for over half a century \cite{kemp,lyantse,gasimov-1968,gasimov-maksudov,gasimov,langer,nagy,bairamov}. For further references, see \cite{guseinov}.

Following the pioneering work of Naimark, the notion of spectral singularity was generalized in various directions \cite{kemp,gasimov-1968,gasimov-maksudov,langer,nagy,bairamov}. In particular, Kemp \cite{kemp} considered spectral singularities of the Schr\"odinger operators (\ref{sch-op}) defined on the full line, i.e., $x\in\R$. These admit a simple description in terms of certain solutions of the Schr\"odinger equation \cite{guseinov}
    \be
    -\psi''(x)+v(x)\psi(x)=k^2\psi(x),\hspace{1cm} x\in\R,
    \label{sch-eq}
    \ee
called the Jost solutions.

Let $v(x)$ be a real or complex scattering potential defined on $\R$, and suppose that $|v(x)|\to 0$ as $x\to\pm\infty$ in such a manner that \cite{kemp}
    \be
    \int_{-\infty}^\infty (1+|x|)|v(x)|dx<\infty.
    \label{condi-1}
    \ee
Then for each $k\in\R$, the Schr\"odinger equation (\ref{sch-eq}) admits a pair of solutions $\psi_{k\pm}$ fulfilling the asymptotic boundary conditions \cite{kemp}:
    \begin{align}
    &\lim_{x\to\pm\infty} e^{\mp ikx}\psi_{k\pm}(x)=1, &&
    \lim_{x\to\pm\infty} e^{\mp ikx}\psi'_{k\pm}(x)=\pm i k.
    \label{jost}
    \end{align}
These are the celebrated \emph{Jost solutions} of (\ref{sch-eq}).
    \begin{defn}
    \label{defn-ss}
    Let $v:\R\to\C$ be a function satisfying (\ref{condi-1}) and $H$ be the Schr\"odinger operator (\ref{sch-op}) that is defined by $v$ on $\R$. A real and positive number $k_\star^2$ is called a \emph{spectral singularity} of $H$ or $v$, if the Jost solutions $\psi_{k_\star\pm}$ of (\ref{sch-eq}) are linearly dependent.
    \end{defn}

It is not difficult to see that the Jost solutions correspond to the scattering states of the potential $v(x)$. But as we shall see below, for real scattering potentials they are always linearly independent. This is why physicists did not pay much attention to spectral singularities throughout the twentieth century.

The situation began to change in 1998 by the discovery of a class of complex potentials that possessed a real spectrum \cite{bender-1998}. A well-known example is $v(x)=ix^3$, whose spectrum is discrete, real, and positive  \cite{dorey-2001}. This unexpected result was initially associated with the fact that the corresponding Schr\"odinger operator (\ref{sch-op}) was invariant under the parity-time-reversal transformation, also known as spacetime reflection \cite{bender-1998},
    \be
    \psi(x)\longrightarrow\cP\cT\psi(x)=\psi(-x)^*,
    \label{PT=}
    \ee
where $\psi$ is an arbitrary square-integrable function, i.e., $\psi\in L^2(\R)$, and $\cP$ and $\cT$ are respectively the parity and time-reversal operators defined by
    \begin{align}
    &\cP\psi(x):=\psi(-x)  , && \cT\psi(x):=\psi(x)^*.
    \label{P-T=}
    \end{align}

We postpone the discussion of the spectral implications of $\cP\cT$-symmetry to Section~\ref{section2}. Here we suffice to mention that during the last 16 years there has been a great interest in the study of $\cP\cT$-symmetric potentials. A substantial amount of the early work on this topic consisted of searching for other examples of complex potentials possessing a real spectrum. A rather straightforward method of constructing such potentials is to generate them from real potentials via non-unitary similarity transformations. The simplest example is a complex translation of the form $v(x)\to v(x+\fz)$, where $\fz$ is a complex parameter \cite{levi-znojil-2000}. To the best of our knowledge, it was in this context that spectral singularities entered in the physics literature.

In 2005 Boris Samsonov noted that the application of complex translations on a real potential could yield complex scattering potentials supporting spectral singularities \cite{samsonov-2005}. He also proposed means of removing these spectral singularities by performing certain supersymmetry transformations. In Samsonov's words this meant ``curing'' a ``disease'', for he maintained that ``Hamiltonians with spectral singularities are `bad'.''  This is a typical reaction when one encounters a `singularity'. However, the history of science teaches us that some of the greatest discoveries of mankind have their root at unwanted `singularities'. One of the aims of the present article is to show that the same applies to spectral singularities, as they provide the mathematical basis for one of the most important discoveries of all times, namely lasers.

\section{$\cP\cT$-Symmetry versus Pseudo-Hermiticity}
\label{section2}

Samsonov's article \cite{samsonov-2005} failed to provide the necessary incentive for the study of the physical aspects of spectral singularities. But soon after, spectral singularities were to reveal their presence in the study of a delta-function potential with an imaginary coupling constant, i.e., $v(x)=i\alpha\,\delta(x)$ with $\alpha\in\R$, \cite{jpa-2005b}. The motivation for this study was provided by attempts at finding a set of necessary and sufficient conditions for the reality of the spectrum of a non-Hermitian linear operator $H$. A well-advertised claim is that $\cP\cT$-symmetry provides such a condition \cite{bender-mannheim}. This is certainly not true if we take (\ref{P-T=}) as the definition of $\cP$ and $\cT$, for there are infinity of examples of real potentials, such as $v(x)=x^2+\sin x$, that do not commute with $\cP\cT$ but have a real spectrum.

In order to ensure the validity of the above claim, we need to reinterpret what we mean by $\cP\cT$-symmetry or generalize it appropriately. First, we recall the following obvious consequences of (\ref{P-T=}).
    \begin{align}
    & [\cP,\cT]=0, && \cP^2=\cT^2=(\cP\cT)^2=I,
    \label{basics}
    \end{align}
where $I$ stands for the identity operator. The following is a precise definition of $\cP\cT$-symmetry.
     \begin{defn}
     Let $\cP$ and $\cT$ be the linear operators defined on $L^2(\R)$ by (\ref{P-T=}). Then a linear operator $H$ acting in $L^2(\R)$ is said to be \emph{$\cP\cT$-symmetric}, if it commutes with $\cP\cT$, i.e., $[H,\cP\cT]=0$. Moreover, suppose that $H$ has a discrete and non-degenerate spectrum, and there is a complete set\footnote{This means that the span of this set is dense in $L^2(\R)$.} of eigenvectors $\psi_n$ of $H$ that are also eigenvectors of $\cP\cT$. Then $H$ is said to have an \emph{unbroken} or \emph{exact  $\cP\cT$-symmetry}.
     \end{defn}
For the cases where $H$ has an exact $\cP\cT$-symmetry, there are $\epsilon_n\in\C$ such that $\cP\cT\psi_n=\epsilon_n\psi_n$. Then, in view of (\ref{basics}), $\psi_n=(\cP\cT)^2\psi_n=\cP\cT(\epsilon_n\psi_n)=|\epsilon_n|^2\psi_n$. This shows that $\epsilon_n=e^{i\alpha_n}$ for some real number $\alpha_n$. Now, setting $\tilde\psi_n:=e^{i\alpha_n/2}\psi_n$, we find $\cP\cT\tilde\psi_n=e^{-i\alpha_n/2}\cP\cT\psi_n=
e^{i\alpha_n/2}\psi_n=\tilde\psi_n$. Therefore, exact $\cP\cT$-symmetry of $H$ means the existence of a complete set of $\cP\cT$-invariant eigenvectors of $H$. This argument relies only on the fact that $\cP\cT$ is an antilinear operator\footnote{An antilinear operator $\cX$ acting in a complex vector space $V$ is one whose domain is a subspace of $V$ and fulfills the antilinearity condition: $\cX(\alpha_1 v_1+\alpha_2 v_2)=\alpha_1^*\cX v_1+\alpha_2^*\cX v_2$ for all $\alpha_1,\alpha_2\in\C$ and $v_1,v_2\in V$.} squaring to $I$. Therefore, it applies to all such operators
$\cX$.\footnote{Because every antilinear operator acting in $L^2(\R)$ can be expressed in the form $\cQ\cT$ for some linear operator $\cQ$, a natural choice of notation for the antilinear operators $\cX$ is $\cQ\cT$ \cite{jpa-2008}.} We use this observation to introduce the notion of exact antilinear symmetry.
    \begin{defn}
    Let $H$ and $\cX$ be respectively linear and antilinear operators acting in a Hilbert space $\sH$, and $I$ be the identity operator on $\sH$. $H$ is said to be $\cX$-symmetric, if
    $[\cX,H]=0$. Furthermore, suppose that $H$ has a discrete spectrum and $\cX^2=I$. Then $H$ is said to have an exact $\cX$-symmetry if there is a complete set of eigenvectors $\psi_n$ of $H$ that are invariant under $\cX$, i.e., $\cX\psi_n=\psi_n$.
    \end{defn}
The following is a useful property of exact antilinear symmetry. Its application to $\cP\cT$-symmetry is the reason for the claim that exact $\cP\cT$-symmetry implies the reality of the spectrum.
    \begin{thm} Eigenvalues of every linear operator $H$ that has an exact antilinear symmetry are real.
    \label{thm1.4}
    \end{thm}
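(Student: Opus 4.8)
The plan is to play the antilinearity of $\cX$ against the eigenvalue equation on the distinguished complete set of $\cX$-invariant eigenvectors, and then to use completeness to rule out any eigenvalue not of this form. All the content is in one short computation; the rest is bookkeeping about the spectrum.

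First I would fix one of the $\cX$-invariant eigenvectors, say $\psi_n$, so that $H\psi_n=E_n\psi_n$ for some $E_n\in\C$ and $\cX\psi_n=\psi_n$. Applying $\cX$ to both sides of $H\psi_n=E_n\psi_n$ and using $[\cX,H]=0$ on the left gives
    \[
    \cX H\psi_n \;=\; H\cX\psi_n \;=\; H\psi_n \;=\; E_n\psi_n,
    \]
while on the right the antilinearity of $\cX$ gives $\cX(E_n\psi_n)=E_n^{*}\,\cX\psi_n=E_n^{*}\psi_n$. Equating the two expressions and using $\psi_n\neq 0$ yields $E_n=E_n^{*}$, i.e. $E_n\in\R$. (Note that $\cX^2=I$ is not needed here; it was used earlier only to pass from a general $\cX$-eigenvector to an $\cX$-invariant one, which is already built into the hypothesis.)

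Second, I would argue that the numbers $\{E_n\}$ exhaust the point spectrum of $H$, so that every eigenvalue of $H$ is one of the $E_n$ and hence real by the first step. Let $E$ be an eigenvalue of $H$ with eigenvector $\phi\neq 0$. Since $H$ has a discrete spectrum and $\{\psi_n\}$ is complete, one expands $\phi$ in terms of the $\psi_n$: in finite dimensions this is immediate because a complete set of eigenvectors is a diagonalizing basis; in the general case one uses the biorthogonal (dual) system associated with $\{\psi_n\}$ and the resulting resolution of the identity. Comparing the action of $H$ on the two sides of the expansion forces $\phi$ to lie in the closed span of those $\psi_n$ with $E_n=E$, whence $E=E_n$ for some $n$ and therefore $E\in\R$.

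The main obstacle is the second step in the infinite-dimensional setting: converting ``the span of $\{\psi_n\}$ is dense'' into ``every eigenvalue equals some $E_n$'' requires a genuine argument — constructing the dual basis, controlling convergence of the eigenfunction expansion, or equivalently analyzing the poles of the resolvent $(H-E)^{-1}$ — and this is where the analytic subtleties (operator domains, whether the $\psi_n$ form a Riesz or merely a complete system) reside. For a bounded diagonalizable $H$, or whenever the eigenprojections sum to the identity, it is routine; for unbounded Schr\"odinger-type operators it needs care. The first step, by contrast, is a two-line formal computation and carries essentially all the conceptual weight of the theorem.
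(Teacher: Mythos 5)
Your proposal is correct, and its core computation is the same idea as the paper's, but executed differently. The paper proves $E_n^*=E_n$ by evaluating the expectation value: it writes $E_n^*=\bra\psi_n|E_n^*\cX\psi_n\kt/\bra\psi_n|\psi_n\kt$ and then walks this through $\cX E_n\psi_n=\cX H\psi_n=H\cX\psi_n=H\psi_n$ to land on $\bra\psi_n|H\psi_n\kt/\bra\psi_n|\psi_n\kt=E_n$. You instead apply $\cX$ directly to the eigenvalue equation and compare $E_n\psi_n$ with $E_n^*\psi_n$ as vectors; this is marginally more elementary since it never invokes the inner product, and your observation that $\cX^2=I$ plays no role in this step is accurate (it matters only in the preceding reduction from $\cX$-eigenvectors to $\cX$-invariant ones, which the hypothesis already supplies). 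Your second step — showing that every eigenvalue of $H$ actually coincides with some $E_n$ from the distinguished $\cX$-invariant family — is absent from the paper: the paper's proof simply begins with ``let $E_n$ be an eigenvalue and $\psi_n$ a corresponding $\cX$-invariant eigenvector,'' tacitly assuming such an eigenvector exists for every eigenvalue. You are right that in finite dimensions this is immediate from completeness and that in infinite dimensions it requires the kind of Riesz-basis or resolution-of-identity hypotheses the paper only mentions later in a footnote; flagging that gap honestly, rather than papering over it, is the one respect in which your write-up is more careful than the original.
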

    \begin{proof}
    Let $E_n$ be an eigenvalue of $H$ and $\psi_n$ be a corresponding $\cX$-invariant eigenvector. Then, in view of the fact that $\cX\psi_n=\psi_n$ and $[\cX,H]=0$,
    \bea
    E_n^*&=&
    \frac{\bra\psi_n|E_n^*\cX\psi_n\kt}{\bra\psi_n|\psi_n\kt}=
    \frac{\bra\psi_n|\cX E_n\psi_n\kt}{\bra\psi_n|\psi_n\kt}=
    \frac{\bra\psi_n|\cX H\psi_n\kt}{\bra\psi_n|\psi_n\kt}\nn\\
    &=&  \frac{\bra\psi_n|H\cX\psi_n\kt}{\bra\psi_n|\psi_n\kt}=
    \frac{\bra\psi_n|H\psi_n\kt}{\bra\psi_n|\psi_n\kt}=E_n.\nn
    \eea
    \end{proof}

This theorem suggests generalizing $\cP\cT$-symmetry to the presence of an antilinear symmetry.
In order to avoid using the same symbols for different concepts, we use `$PT$-symmetry' to refer to this generalization.
    \begin{defn}
    We say that $H$ is \emph{$PT$-symmetric} if it has an exact antilinear symmetry.
    \label{PT-defn}
    \end{defn}
In view of this definition, Theorem~\ref{thm1.4} is equivalent to the statement that $PT$-symmetry is a sufficient condition for the reality of the spectrum of $H$. The converse can also be established if $\sH$ is finite-dimensional or if we impose a further technical condition\footnote{This is the condition of the existence of a set of eigenvectors of $H$ that forms a Riesz basis, i.e., it can be mapped to an orthonormal basis by an invertible bounded operator \cite{review}.} on $H$, \cite{p3}. Therefore, introducing the above notion of $PT$-symmetry secures the validity of the claim that it is a necessary and sufficient condition for the reality of the spectrum of a large class of linear operators. However the price one pays for doing so is a clear distinction between $PT$-symmetry and the parity-time-reversal (spacetime reflection) symmetry that we label by $\cP\cT$-symmetry. Indeed unlike Hermiticity which is a sufficient condition for the reality of the spectrum, $PT$-symmetry is both necessary and sufficient. But this $PT$-symmetry does not mean spacetime reflection $(\cP\cT)$ symmetry. Similarly to Hermiticity exact $\cP\cT$-symmetry is a sufficient but not necessary condition for the reality of the spectrum of a linear operator. Non-exact $\cP\cT$-symmetry, which can be immediately checked for a given operator, is neither necessary nor sufficient.

A major difficulty with the above notion of $PT$-symmetry is that there does not exist a universal choice for the antilinear symmetry appearing in Definition~\ref{PT-defn}; each $PT$-symmetric linear operator $H$ has its own set of antilinear operators $\cX$ which make $H$ exactly $\cX$-symmetric. If we denote this set by $\cS_H$, then $PT$-symmetry of $H$ is equivalent to the condition that $\cS_H$ is nonempty. In practice, in order to check if this is the case, one must determine an appropriate set of eigenvectors $\psi_n$ of $H$, make sure that they form a complete set, and try to construct an antilinear operator $\cX$ that leaves $\psi_n$'s invariant and squares to $I$. This is generally a difficult task.

The question of finding a necessary and sufficient condition for the reality of the spectrum of a non-Hermitian operator has a more illuminating answer.
    \begin{thm}
    Let $H$ be a linear operator with a complete set of eigenvectors that acts in a finite-dimensional inner-product space. Then $H$ has a real spectrum if and only if there is a
    positive-definite operator $\etap$ intertwining $H$ and its adjoint $H^\dagger$, i.e.,
        \be
    	H^\dagger\etap=\etap\!H,
    	\label{ph-2}
    	\ee
    alternatively $H^\dagger$ is related to $H$ by the similarity transformation:
    	\be
    	H^\dagger=\etap\!H\,\eta_{_+}^{-1}.
    	\label{ph-1}
    	\ee
    \label{thm1}
    \end{thm}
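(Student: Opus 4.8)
The plan is to establish the two implications of the equivalence separately; in both, the only real input is that, by hypothesis, $H$ is diagonalizable. First note that (\ref{ph-2}) and (\ref{ph-1}) are trivially equivalent, since a positive-definite $\etap$ is invertible and one passes from one to the other by multiplying on the right by $\etap^{-1}$. For the \emph{sufficiency} direction, suppose a positive-definite $\etap$ with $H^\dagger\etap=\etap H$ is given. I would introduce the positive-definite square root $\rho:=\etap^{1/2}$ and set $h:=\rho\,H\,\rho^{-1}$; using $\etap=\rho^2$ and $\rho^\dagger=\rho$, a one-line computation gives $h^\dagger=\rho^{-1}H^\dagger\rho=\rho^{-1}(\etap H\etap^{-1})\rho=h$, so $h$ is Hermitian and hence has a real spectrum, and therefore so does the similar operator $H$. (Alternatively one can avoid the square root and repeat the argument of Theorem~\ref{thm1.4} almost verbatim: if $H\psi=E\psi$ with $\psi\neq0$ then $E^*\bra\psi|\etap\psi\kt=\bra H\psi|\etap\psi\kt=\bra\psi|H^\dagger\etap\psi\kt=\bra\psi|\etap H\psi\kt=E\bra\psi|\etap\psi\kt$, and $\bra\psi|\etap\psi\kt>0$ forces $E=E^*$; this direction does not even use completeness.)

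For the \emph{necessity} direction, assume the spectrum of $H$ is real. Using the complete set of eigenvectors I would fix an eigenbasis $\{\psi_n\}$ with $H\psi_n=E_n\psi_n$ together with its biorthonormal dual $\{\phi_n\}$, characterized by $\bra\phi_m|\psi_n\kt=\delta_{mn}$, so that $H=\sum_n E_n|\psi_n\kt\bra\phi_n|$ and, taking the adjoint and using $E_n\in\R$, $H^\dagger=\sum_n E_n|\phi_n\kt\bra\psi_n|$. Then I would set $\etap:=\sum_n|\phi_n\kt\bra\phi_n|$ and check by a direct computation with the biorthonormality relations that $\etap H=\sum_n E_n|\phi_n\kt\bra\phi_n|=H^\dagger\etap$, which is (\ref{ph-2}). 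Finally one must verify that this $\etap$ is genuinely positive-definite and not merely nonnegative: $\bra v|\etap v\kt=\sum_n|\bra\phi_n|v\kt|^2$ vanishes only if $v$ is orthogonal to every $\phi_n$, i.e. only if $v=0$, since the $\phi_n$ span the space.

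I expect the only point requiring genuine care in the finite-dimensional statement to be this distinction between positive-\emph{definiteness} and mere invertibility or nonnegativity of the constructed $\etap$, together with the observation that the completeness hypothesis is truly needed for necessity: a single Jordan block with a real eigenvalue has a real spectrum yet admits no positive-definite intertwiner, as one sees at once by writing (\ref{ph-2}) out in that basis. The substantive obstacles arise only in the infinite-dimensional setting relevant to spectral singularities, where one must worry about boundedness and bounded invertibility of $\etap$, convergence of the defining series, and replacing ``complete set of eigenvectors'' by the Riesz-basis condition mentioned above; none of these intervene in the statement as given here.
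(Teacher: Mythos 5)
Your proof is correct and follows essentially the same route as the paper, which defers the details to Refs.~\cite{p1,p2,p3} but displays exactly your two ingredients: the spectral representation $\etap=\sum_n|\phi_n\kt\bra\phi_n|$ of Eq.~(\ref{spec-rep}) for necessity, and the Hermitization $h=\rho\,H\rho^{-1}$ with $\rho=\sqrt{\etap}$ for sufficiency. Your verification of genuine positive-definiteness and your Jordan-block remark showing that the completeness hypothesis is indispensable for necessity are correct and consistent with the paper's framework.
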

Conditions~(\ref{ph-2}) and (\ref{ph-1}), that we call `$\etap$-pseudo-Hermiticity' of $H$, was derived in Refs.~\cite{p1,p2,p3} for a more general class of operators. These act in a possibly infinite-dimensional Hilbert space $\sH$, have a discrete spectrum, and possess a complete biorthonormal eigensystem $\{(\psi_n,\phi_n)\}$, \cite{review}. The latter means the existence of a sequence of complex numbers $\{E_n\}$ and a pair of sequences of vectors, $\{\psi_n\}$ and $\{\phi_n\}$, which satisfy
	\begin{align}
	& H\psi_n=E_n\psi_n, && H^\dagger\phi_n=E_n^*\phi_n,
	\label{eg-va}\\
	& \bra\psi_m|\phi_n\kt=\delta_{mn}, &&
	\sum_n |\psi_n\kt\bra\phi_n|=I.
	\label{bior}
	\end{align}
Here $\bra\cdot|\cdot\kt$ stands for the inner product of $\sH$. We use the term `diagonalizable' to mean that $H$ admits a complete biorthonormal eigensystem.

Theorem~\ref{thm1} admits an infinite-dimensional generalization provided that we impose further restrictions on $H$ and $\etap$ \cite{review}. For pedagogical reasons we postpone discussing these to the final three paragraphs of this section. For the moment, we follow the physicists' tradition of assuming that what we know about finite dimensions is essentially valid in infinite dimensions. For example, we take the following condition as the definition of a Hermitian or self-adjoint operator:
    \[\bra\phi|H\psi\kt=\bra H\phi|\psi\kt,\]
where $\phi$ and $\psi$ are arbitrary elements of $\sH$ (hence neglecting domain issues).

A key observation made in \cite{p1,p2,p3} is that the reality of the spectrum of $H$ is related to the fact that we can turn it into a Hermitian operator by modifying the inner product of $\sH$ properly. Using the term `Hermitizablility' for the latter property, we can say that \emph{a diagonalizable operator with a real spectrum need not be Hermitian, but it is necessarily Hermitizable}. Conversely, \emph{every Hermitizable operator is diagonalizable and has a real spectrum.} Therefore, \emph{Hermitizability is a necessary and sufficient condition for the reality of the spectrum of $H$}.\footnote{A rigorous extension of this result to infinite dimensions requires special care. In mathematics literature, it is studied in the context of `symmetrizable' \cite{istratescu} and `quasi-Hermitian operators' \cite{dieudonne}. For more recent developments, see \cite{AK,AT1,AT2}.}

The modified inner product that achieves the Hermitization of $H$ is determined by the operator $\etap$ according to
	\be
	\bra\psi,\phi\kt_{\eta_+}:=\bra\psi|\etap\phi\kt,
	\label{mod-inn}
	\ee
where $\psi,\phi\in\sH$ are arbitrary. In other words, if $\sH_{\eta_+}$ labels the Hilbert space obtained by endowing the set of vectors belonging to $\sH$ with the inner product $\bra\cdot,\cdot\kt_{\eta_+}$, then $H:\sH_{\eta_+}\to\sH_{\eta_+}$ is Hermitian.

The operator $\etap$ that defines the modified inner product (\ref{mod-inn}) is usually called a `metric operator.' It is not difficult to show that its positive square root, $\rho:=\sqrt\etap$, defines a unitary operator mapping $\sH_{\eta_+}$ onto $\sH$ and that $h:=\rho\,H\rho^{-1}$ is a Hermitian operator acting in $\sH$, \cite{jpa-2003}. This provides a direct evidence for the reality of the spectrum of $H$, for $H$ and $h$ are isospectral. It also makes a connection with an earlier work on quasi-Hermitian operators that is done in the context of nuclear physics \cite{SGH}.

Another useful result of Refs.~\cite{p1,p2,p3} is the following spectral representation of the metric operator.
	\be
	\etap=\sum_n|\phi_n\kt\bra\phi_n|.
	\label{spec-rep}
	\ee
Because the eigenvectors $\phi_n$ of $H^\dagger$ are not unique, this equation signifies the non-uniqueness of the metric operator \cite{jmp-2003,jpa-2003}. Different choices for $\{\phi_n\}$ determine different metric operators for $H$. Once such a choice is made, we can construct the corresponding Hilbert space $\sH_{\eta_+}$ and view $H$ as a Hamiltonian operator acting in $\sH_{\eta_+}$. By construction $H:\sH_{\eta_+}\to \sH_{\eta_+}$ is a Hermitian operator. Therefore $(\sH_{\eta_+},H)$ determines a unitary quantum system. The pure states of this system are represented by the rays in $\sH_{\eta_+}$, the observables are given by Hermitian operators acting in $\sH_{\eta_+}$, and the dynamics is governed by the time-dependent Schr\"odinger equation defined by $H$ in $\sH_{\eta_+}$.

Applications of the above method of constructing metric operators and the modified inner products for various toy models have been explored in the literature. A comprehensive list of references published prior to 2010 is given in the review article \cite{review}. Here we confine our attention to a very simple example that was originally considered in \cite{jpa-2006a}.

Consider the case that $v(x)=0$, i.e., $H$ is the second derivative operator acting in $L^2(\R)$. It is well-known that $H$ is Hermitian and has a nonnegative real continuous spectrum. We can easily check that it satisfies the $\etap$-pseudo-Hermiticity relation~(\ref{ph-1}) for
    \be
    \etap:=e^{-\kappa\cP}=\cosh(\kappa)I-\sinh(\kappa)\cP,
    \label{etap=1}
    \ee
where $\kappa$ is an arbitrary real number, and $\cP$ is the parity operator (\ref{P-T=}). Equation~(\ref{etap=1}) defines a genuine metric operator. Substituting it in (\ref{mod-inn}) yields the following expression for the corresponding modified inner product.
    \be
    \bra \phi,\psi\kt_{\eta_+}=\cosh(\kappa)\int_{-\infty}^\infty \phi(x)^*\psi(x)dx-
    \sinh(\kappa)\int_{-\infty}^\infty \phi(x)^*\psi(-x)dx.
    \label{q1}
    \ee
For the standard position operator $X$, that is given by $X\psi(x):=x\psi(x)$, we can use (\ref{q1}) to show that
    \[\bra\phi,X\psi\kt_{\eta_+}-\bra X\phi,\psi\kt_{\eta_+}=2\sinh(\kappa)\int_{-\infty}^\infty x\phi(x)^*\psi(-x)dx.\]
This quantity differs from zero for $\kappa\neq 0$ and $\psi(x)=x\phi(x)=x\,e^{-x^2}$. Therefore,
as an operator acting in $\sH_{\eta_+}$,  $X$ is not Hermitian, unless if $\kappa=0$. The same holds for the standard momentum operator, $P:=-i\frac{d}{dx}$.

Clearly the positive square root of the metric operator (\ref{etap=1}) has the form $\rho=e^{-\kappa\cP/2}$. In view of this relation, it is easy to show that the operators $X_{\eta_+}$ and $P_{\eta_+}$ defined by
	\begin{align*}
	&X_{\eta_+}:=\rho^{-1}X\,\rho=\eta_{_+}^{-1}X=
	[\cosh(\kappa)I+\sin(\kappa)\cP]X,\\
	&P_{\eta_+}:=\rho^{-1}P\,\rho=\eta_{_+}^{-1}P=
	[\cosh(\kappa)I+\sin(\kappa)\cP]P,
	\end{align*}
are Hermitian operators acting in $\sH_{\eta+}$. Because $[X_{\eta_+},P_{\eta_+}]=iI$, we can take $X_{\eta_+}$ and $P_{\eta_+}$ to represent the position and momentum observables of the system defined by $(\sH_{\eta+},H)$. For all real numbers $\kappa$, this is just a free particle moving on a straight line. But for different choices of $\kappa$, we have different operators representing the position and momentum of the particle. This has some peculiar consequences. For example, for $\kappa\neq 0$, the spatially localized states of the particle correspond to a linear combination of two Dirac delta-functions rather than a single delta-function \cite{jpa-2006a}!

When $\sH$ is an infinite-dimensional Hilbert space the above constructions are valid provided that we impose some additional technical conditions. Specifically, the complete biorthonormal eigensystem $\{(\psi_n,\phi_n)\}$ should be bounded \cite{review}. This is equivalent to the condition that $\{\psi_n\}$ and $\{\phi_n\}$ are Riesz bases of $\sH$, which means that they can be mapped to an orthonormal basis by a bounded invertible operator \cite{review,BIT}. The boundedness of $\{(\psi_n,\phi_n)\}$ implies that the metric operator $\etap$ must be a positive automorphism, i.e., a positive invertible operator that is defined everywhere in $\sH$ (which makes it bounded) and has a bounded inverse \cite{review}.

It turns out that if we define the Schr\"odinger operator for the potential $ix^3$ as a linear operator (with maximal domain) acting in $\sH:=L^2(\R)$, then we cannot satisfy (\ref{ph-2}) or (\ref{ph-1}) using a bounded positive-definite operator $\etap$ that is inversely bounded. Therefore, strictly speaking, an appropriate metric operator does not exist for this potential  \cite{seigl}. As we explain below this is a mathematical technicality that can be circumvented by paying due attention to the role of the linear operators representing physical observables in quantum mechanics.

Consider redefining the Hilbert space $\sH$ and the operator $H$ in such a way that the new Hilbert space $\sH'$ includes the eigenvectors of $H$ and the new operator $H'$, which acts in $\sH'$, shares both the spectrum and eigenvectors of $H$, \cite{ptrs-2013}. Because we can only prepare state vectors which are superpositions of the eigenvectors of the relevant observables, as far as $H$ is concerned both $\sH$ and $\sH'$ include all the prepareable state vectors, and $H$ and $H'$ are equivalent as representations of a quantum mechanical observable.  As shown in Ref.~\cite{ptrs-2013}, for a given metric operator $\etap$, which may violate the conditions of boundedness or inverse boundedness, it is possible to construct $\sH'$ and $H'$ in such a way that they have the above-mentioned properties and in addition $H'$ be a Hermitian operator. Therefore although one cannot use $(\sH,H)$ to define a unitary quantum system directly, one can construct $\sH'$ and $H'$ which contain the same physically relevant ingredients and use $(\sH',H')$ to define a unitary quantum system.

\section{Singularities of the Metric Operators}

Consider the Hilbert space $\sH$ obtained by endowing $\C^2$ with the Euclidean inner product. The elements of $\sH$ and the linear operators acting in it can be respectively represented by $2\times 1$ and $2\times 2$ matrices in the standard basis of $\C^2$. Using the same symbol for the matrix representations and the corresponding vectors and operators, we consider constructing the most general metric operator for
    \be
    H:=\begin{bmatrix}
    0 & 1\\
    x^2 & 0 \end{bmatrix},
    \label{2-dim}
    \ee
where $x\in\R$. It is easy to show that for this operator,
    \begin{align}
    &E_n=(-1)^n x, && \psi_n=\frac{N_n}{\sqrt 2}\begin{bmatrix}
    (-1)^n\\
    x\end{bmatrix}, && \phi_n=\frac{1}{\sqrt 2 N_n^*}\begin{bmatrix}
    (-1)^n\\
    x^{-1}\end{bmatrix},
    \label{last-eq}
    \end{align}
where $n=1,2$ and $N_n$ are arbitrary nonzero complex coefficients possibly depending on $x$.

Inserting the last of Eqs.~(\ref{last-eq}) in (\ref{spec-rep}) and introducing $a_\pm:=(|N_2|^{-2}\pm|N_1|^{-2})/2$, we find
    \begin{align*}
    &\etap=\begin{bmatrix}
    a_+ & a_- x^{-1}\\
    a_- x^{-1} & a_+ x^{-2} \end{bmatrix}.
    \end{align*}
This relation identifies $x=0$ with a singularity of all possible metric operators for $H$.  Note also that $H$ loses the property of being diagonalizable precisely for this value of $x$. This is an example of what is called an exceptional point \cite{H,jmp-2008} or a non-Hermitian degeneracy \cite{berry}.

The term `exceptional point' is introduced by Kato in his study of the effects of perturbations of a linear operator on its spectral properties \cite{kato}. The following is a widely used definition of this concept which differs slightly from Kato's.
    \begin{defn}
Let $V$ be a vector space, $m$ be a positive integer, $H(x):V\to V$ be a linear operator depending on $m$ real parameters $x_1,x_2,\cdots,x_m$. We identify these with local coordinates of a point $x$ of a parameter space  (a smooth manifold) $M$. Suppose that for each $x\in M$ the eigenvalues of $H(x)$ have finite geometric multiplicity and form a countable set of isolated points of $\C$ that we denote by $E_n(x)$. Here $n$ is a spectral label taking values in a discrete set $\cN$. Let $\mu_n(x)$ be the geometric multiplicity of $E_n(x)$, i.e., the dimension of the span of eigenvectors of $H(x)$ that are associated with the eigenvalue $E_n(x)$. A point $x_0$ of $M$ is called an exceptional point of $H(x)$ if there are $n\in\cN$, $\epsilon\in\R^+$, and a parameterized curve in $M$, i.e., a continuos function, $\gamma:(-\epsilon,\epsilon)\to M$, such that $\gamma(0)=x_0$ and for all $t\neq 0$, $\mu_n(\gamma(t))\neq \mu_n(x_0)$.
    \label{exc-pt}
    \end{defn}
For the case that $V$ is endowed with the structure of an inner-product space, we can speak of the adjoint of $H(x)$ and decide whether it is Hermitian. If for all $x\in M$, $H(x)$ is a Hermitian operator, the geometric multiplicity of the eigenvalues $E_n(x)$ do not undergo discontinuous changes and an exceptional point cannot exist. Therefore, non-Hermiticity is a necessary condition for the emergence of an exceptional point.

It turns out that exceptional points have a number of interesting physical realizations. See for example \cite{H-S,H,berry,jmp-2008,muller-rotter,moiseyev-book} and references therein. In particular, they lead to certain geometric phases which have been the subject of intensive theoretical \cite{H-S,H,berry,muller-rotter,mailybaev,jmp-2008} and experimental studies \cite{DGHHHRR,SHS,DHKMRS} since the early 1990's.

The two-dimensional model (\ref{2-dim}) can be easily generalized to higher dimensional matrix Hamiltonians $H(x)$, \cite{jmp-2008}. If we choose the eigenvectors of $H(x)$ in such a way that they are nonsingular functions of $x$, then exceptional points appear as the singularities of the eigenvectors of $H(x)^\dagger$ and consequently the corresponding metric operator (\ref{spec-rep}).

Definition~\ref{exc-pt} introduces exceptional points in terms of a condition on the eigenvalues of $H(x)$. If this operator acts in a Hilbert space, we can speak of its spectrum. This is a subset of $\C$ that in addition to the eigenvalues may contain numbers that are not eigenvalues of $H(x)$. The latter constitute two disjoint sets called the continuous and the residual spectra of $H(x)$ \cite{reed-simon}. Hermitian operators have an empty residual spectrum. The same is true for a large class of non-Hermitian operators.

A natural question that arises in the study of non-Hermitian operators with a real spectrum is how to generalize the notions of diagonalizability and the metric operator for operators whose spectrum includes a continuous part. The first step in this direction was taken in Ref.~\cite{jmp-2005}. It involved a direct extension of the approach developed for operators with a discrete spectrum to the  imaginary $\cP\cT$-symmetric barrier potential,
\begin{equation}
   v(x)=\left\{\begin{array}{ccc}
   -i\zeta &{\rm for} & -1\leq x\leq 0,\\
   i\zeta & {\rm for} & 0<x\leq 1,\\
   0 & {\rm for} & |x|\leq 1, \end{array}\right. \hspace{1cm} \zeta\in\R.
        \label{pt-barrier}
        \end{equation}
To the best of our knowledge, this provided the first example of a $\cP\cT$-symmetric potential which admitted an optical realization \cite{RDM}. The next step was to carry out the same analysis for the delta-function potential with a complex coupling constant \cite{jpa-2005b},
	\begin{equation}
	v(x)=\fz\,\delta(x), \hspace{1cm}\fz\in\C.
	\label{z-delta}
	\end{equation}
	
The treatment of (\ref{pt-barrier}) and (\ref{z-delta}) that was offered in \cite{jmp-2005} and \cite{jpa-2005b} is perturbative in nature. But there is an important difference; for imaginary values of $\fz$ regardless of how small $|\fz|$ is, the perturbative calculation of the metric operator for (\ref{z-delta}) is obstructed by the emergence of a singularity. In the remainder of this section, we provide a general description of this phenomenon and its relation to spectral singularities that was originally noticed in \cite{jpa-2005b} and explored more thoroughly for the double-delta-function potential in \cite{jpa-2009}:
	\begin{equation}
	v(x)=\fz_-\,\delta(x+a)+\fz_+\,\delta(x-a), \hspace{1cm}\fz_\pm\in\C,~a\in\R^+.
	\label{double-delta}
	\end{equation}

Let $v_z:\R\to\C$ be a scattering potential depending on complex parameters $z_1,z_2,\cdots,z_m$, that we collectively denote by $z$, i.e., $z:=(z_1,z_2,\cdots,z_m)$. Suppose that the  Schr\"odinger operator $H_z:=-\frac{d^2}{dx^2}+v_z(x)$ acts in $L^2(\R)$ and has a real and purely continuous spectrum given by $[0,\infty)$, i.e., its  point and residual spectra are empty. Then the nonzero elements of the spectrum of $H_z$ correspond to the numbers $k^2$ appearing on the right-hand side of the Schr\"ondinger equation~(\ref{sch-eq}). These are associated with a linearly independent pair of solutions of this equation that we denote by $\psi^{(z)}_{k,a}$ with $a=1,2$;
	\be
	H_z\psi^{(z)}_{k,a}=k^2\psi^{(z)}_{k,a}.
	\label{sch-eq-2}
	\ee	
Because $\psi^{(z)}_{k,a}$ do not belong to $L^2(\R)$, they are not eigenvectors of $H_z$. We refer to them as `generalized eigenfunctions' of $H_z$. Similarly, we can construct generalized eigenfunctions of $H_z^\dagger:=-\frac{d^2}{dx^2}+v_z(x)^*$ that we denote by $\phi^{(z)}_{k,a}$. These satisfy
	\be
	H_z^\dagger\phi^{(z)}_{k,a}=k^2\phi^{(z)}_{k,a}.
	\label{sch-eq-3}
	\ee
	
We can generalize the notion of `diagonalizability' for $H_z$, by demanding the existence of an eigensystem $\{(\psi^{(z)}_{k,a},\phi^{(z)}_{k,a})\}$ which satisfy the following biorthonormality and completeness relations \cite{jmp-2005}.
	\begin{align}
	&\bra \phi^{(z)}_{k,a}|\psi^{(z)}_{q,b}\kt=\delta_{ab}\delta(k-q), &&
	\sum_{a=1}^2\int_0^\infty dk~ |\psi^{(z)}_{k,a}\kt\bra\phi^{(z)}_{k,a}|=I.
	\label{biortho-3}
	\end{align}
Similarly we generalize the expression~(\ref{spec-rep}) for the metric operator:
	\be
	\etap=\sum_{a=1}^2\int_0^\infty dk~ |\phi^{(z)}_{k,a}\kt\bra\phi^{(z)}_{k,a}|.
	\label{etap=111}
	\ee	
	
Now, we demand that $v_z(x)^*=v_{z^*}(x)$. Then it is easy to see that
    	\be
    	H_z^\dagger\psi^{(z^*)}_{k,a}=
	\left[-\frac{d^2}{dx^2}+v_z(x)^*\right]\psi^{(z^*)}_{k,a}=
	H_{z^*}\psi^{(z^*)}_{k,a}=k^2\psi^{(z^*)}_{k,a}.
    	\label{condi-H}
    	\ee
Because for each $k\in\R^+$, the Schr\"odinger equation $H_z^\dagger\psi=k^2\psi$ has two linearly independent solutions, Eqs.~(\ref{sch-eq-3}) and (\ref{condi-H}) imply that $\phi^{(z)}_{k,a}$ are linear combinations of $\psi^{(z^*)}_{k,a}$, i.e., there are $J_{ab}(k)\in\C$ such that
	\be
	\phi^{(z)}_{k,a}=\sum_{b=1}^2 J^{(z)}_{ab}(k)\psi^{(z^*)}_{k,b}.
	\label{q101}
	\ee
It is also not difficult to show that $\bra\psi^{(z^*)}_{k,a}|\psi^{(z)}_{q,b}\kt$ is proportional to $\delta(k-q)$, i.e., there are $K^{(z)}_{ab}(k)\in\C$ such that
	\be
	\bra\psi^{(z^*)}_{k,a}|\psi^{(z)}_{q,b}\kt=K^{(z)}_{ab}(k)\delta(k-q).
	\label{q102}
	\ee
Inserting (\ref{q101}) in the first equation in (\ref{biortho-3}) and making use of (\ref{q102}), we find \cite{jpa-2009}
	\be
	\bJ(k)^{(z)*}\bK^{(z)}(k)=\bI,
	\label{q103}
	\ee
where $\bJ^{(z)}(k)$ and $\bK^{(z)}(k)$ are $2\times 2$ matrices having $J^{(z)}_{ab}(k)$ and $K^{(z)}_{ab}(k)$ as their entries, and $\bI$ is the $2\times 2$ identity matrix. Similarly, using (\ref{etap=111}), (\ref{q101}), and (\ref{q103}), we obtain
	\begin{align}
	&\etap=\sum_{a,b=1}^2 \int_0^k dk~\cE_{ab}^{(z)}(k)|\psi_{k,a}^{(z^*)}
	\kt\bra\psi_{k,b}^{(z^*)}|,
	\label{etap-202}
	\end{align}
where $\cE_{ab}^{(z)}(k)$ are entries of the matrix $[\bK^{(z)}(k)\bK^{(z)}(k)^\dagger]^{-1}$.

Equation~(\ref{q103}) implies that $\det(\bK^{(z)}(k))\neq 0$. But in general there is no reason why this relation should hold for all $k$ and $z$. Explicit calculations for the potentials (\ref{z-delta}) and (\ref{double-delta}) show that the values of $k^2$ for which $\det(\bK^{(z)}(k))=0$ are precisely the spectral singularities of the Schr\"odinger operator $H_z$, \cite{jpa-2009}. We are not aware of a proof of this statement for a general complex scattering potential. The proof for the double-delta-function potential that was given in  \cite{jpa-2009} revealed a useful connection between spectral singularities and the transfer matrix $\bM(k)$ of scattering theory \cite{sanchez}. It turned out that $\det(\bK^{(z)}(k))$ was proportional to the $M_{22}(k)$ entry of $\bM(k)$ with a nonzero proportionality factor. This provided the key observation that immediately led to the explanation of the physical meaning of a spectral singularity \cite{prl-2009}. We give a detailed discussion of these developments in the next section.

We close this section by noting that according to  (\ref{etap-202}), spectral singularities are also singularities of the metric operator. In this sense they are generalizations of the phenomenon of exceptional points to the linear operators that possess a nonempty continuous spectrum.

\section{Scattering Theory and Spectral Singularities}

Consider a possibly complex scattering potential $v(x)$ satisfying (\ref{condi-1}). The left- and right-incident scattering solutions of the Schr\"odinger equation, that we respectively denote by $\psi^l_k(x)$ and $\psi^r_k(x)$, satisfy the following asymptotic boundary conditions.
    \begin{align}
    &\psi^l_{k}(x)\to
    \left\{\begin{array}{ccc}
    \sA^l(k)\left[e^{ik x}+R^l(k) e^{-ik x}\right]&{\rm as}& x\to-\infty,\\[6pt]
    \sA^l(k) T^l(k) e^{ik x} &{\rm as}& x\to \infty,
    \end{array}\right.
    \label{psi-l}\\[6pt]
    &\psi^r_{k}(x)\to
    \left\{\begin{array}{ccc}
    \sA^r(k) T^r(k) e^{-ik x}&{\rm as}& x\to-\infty,\\[6pt]
    \sA^r(k) \left[e^{-ik x}+R^r(k) e^{ik x}\right]&{\rm as}& x\to \infty,
    \end{array}\right.
    \label{psi-r}
    \end{align}
where $\sA^{l/r}$, $R^{l/r}$, and $T^{l/r}$ are in general complex-valued functions. Because the Schr\"odinger equation (\ref{sch-eq}) is linear, the choice of $\sA^{l/r}$ does not affect the physically measurable quantities. This is not the case for $R^{l/r}$ and $T^{l/r}$, which are known as the left/right reflection and transmission amplitudes. Their modulus squared, $|R^{l/r}|^2$ and $|T^{l/r}|^2$, determine the left/right reflection and transmission coefficients that can be measured in experiments.\footnote{Notice that some authors use the symbols $R^{r/l}$ and $T^{r/l}$ for reflection and transmission coefficients.}

A well known consequence of the linearity of the Schr\"odinger equation (\ref{sch-eq}) is that $T^l=T^r$, \cite{ahmed-2001,prl-2009,SM-2014}.\footnote{This arises from Wronskian identities \cite{prl-2009,SM-2014} and has nothing to do with the reality of the potential as claimed in \cite{inverse-scattering}.} We therefore use $T$ for $T^{l/r}$. It is also easy to see that $\psi^{l/r}_k$ coincide with the Jost solutions $\psi_{k\pm}$ for $\sA^{l/r}(k)T(k)=1$;
    \begin{align}
    &\psi_{k+}(x)\to
    \left\{\begin{array}{ccc}
    T (k)^{-1}\left[e^{ik x}+R^l(k) e^{-ik x}\right]&{\rm as}& x\to-\infty,\\
    e^{ik x} &{\rm as}& x\to \infty,
    \end{array}\right.
    \label{psi-plus}\\
    &\psi_{k-}(x)\to
    \left\{\begin{array}{ccc}
    e^{-ik x}&{\rm as}& x\to-\infty,\\
    T (k)^{-1}\left[e^{-ik x}+R^r(k) e^{ik x}\right]&{\rm as}& x\to \infty.
    \end{array}\right.
    \label{psi-minus}
    \end{align}
The existence of the Jost solutions implies that $T(k)\neq 0$, i.e., perfectly absorbing potentials \cite{muga} do not exist.

The coefficients of the $e^{\pm ikx}$ that appear on the right-hand side of (\ref{psi-plus}) and (\ref{psi-minus}) turn out to coincide with the entries of a $2\times 2$ complex matrix known as the transfer matrix.

Because $v(x)\to 0$ as $x\pm\infty$, every solution $\psi(x)$ of the Schr\"odinger equation (\ref{sch-eq}) satisfies
    \be
    \psi(x)\to A_\pm(k)e^{ikx}+B_\pm(k) e^{-ikx}\hspace{.5cm} {\rm as}\hspace{.5cm}x\to\pm\infty,
    \label{asymp-gen}
    \ee
where $A_\pm(k)$ and $B_\pm(k)$ are complex coefficients. The transfer matrix $\bM(k)$ is defined by
the relation
    \[\begin{bmatrix} A_+(k)\\ B_+(k)\end{bmatrix}=
    \bM(k)\begin{bmatrix} A_-(k)\\ B_-(k)\end{bmatrix}.\]
In light of (\ref{psi-plus}), (\ref{psi-minus}), and (\ref{asymp-gen}), we can relate the entries $M_{ij}(k)$ of the transfer matrix $\bM(k)$ with the reflection and transmission amplitudes. This results in \cite{prl-2009}
    \begin{align}
    &M_{11}=T-\frac{R^l R^r}{T}, && M_{12}=\frac{R^r}{T}, &&
    M_{21}=-\frac{R^l}{T}, && M_{22}=\frac{1}{T},
    \label{M=RT}
    \end{align}
which, in particular, imply $\det\bM(k)=1$. Furthermore, we can use these relations to express (\ref{psi-plus}) and (\ref{psi-minus}) in the form
    \begin{align}
    &\psi_{k+}(x)\to
    \left\{\begin{array}{ccc}
    M_{22}(k)e^{ik x}-M_{21}(k) e^{-ik x}&{\rm as}& x\to-\infty,\\
    e^{ik x} &{\rm as}& x\to \infty,
    \end{array}\right.
    \label{psi-plus-2}\\
    &\psi_{k-}(x)\to
    \left\{\begin{array}{ccc}
    e^{-ik x}&{\rm as}& x\to-\infty,\\
    M_{22}(k)e^{-ik x}+M_{12}(k) e^{ik x}&{\rm as}& x\to \infty.
    \end{array}\right.
    \label{psi-minu2}
    \end{align}
The following characterization of spectral singularities is a direct consequence of these equations.
    \begin{thm}
    \label{thm4-1}
    Let $v:\R\to\C$ and $H$ be as in Definition~\ref{defn-ss}, $\bM(k)$ be the transfer matrix of $v$, $M_{ij}(k)$ be the entries of $\bM(k)$, and $k_\star$ be a positive real number. Then $k_\star^2$ is a spectral singularity of $H$ (or $v$) if and only if $M_{22}(k_\star)=0$.
    \label{thm-ss}
    \end{thm}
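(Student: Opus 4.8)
The plan is to read off the result directly from the asymptotic forms (\ref{psi-plus-2}) and (\ref{psi-minu2}), which already express the Jost solutions $\psi_{k\pm}$ in terms of the transfer-matrix entries. First I would recall that, by Definition~\ref{defn-ss}, $k_\star^2$ is a spectral singularity precisely when $\psi_{k_\star+}$ and $\psi_{k_\star-}$ are linearly dependent, i.e.\ when there is a nonzero constant $c$ with $\psi_{k_\star-}=c\,\psi_{k_\star+}$ (or vice versa). Since linear dependence of two solutions of a second-order linear ODE is equivalent to the vanishing of their Wronskian, and the Wronskian is $x$-independent, it suffices to evaluate it using the asymptotic expansions as $x\to+\infty$ (or $x\to-\infty$); the asymptotics are exact in the sense that they control $\psi$ and $\psi'$ up to terms decaying because of (\ref{condi-1}), so no error terms survive in the Wronskian.

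Concretely, I would compute $W[\psi_{k-},\psi_{k+}]:=\psi_{k-}\psi_{k+}'-\psi_{k-}'\psi_{k+}$ as $x\to+\infty$. There $\psi_{k+}(x)\to e^{ikx}$ while $\psi_{k-}(x)\to M_{22}(k)e^{-ikx}+M_{12}(k)e^{ikx}$, and differentiating, $\psi_{k+}'(x)\to ik\,e^{ikx}$, $\psi_{k-}'(x)\to -ik\,M_{22}(k)e^{-ikx}+ik\,M_{12}(k)e^{ikx}$. Substituting and simplifying, the $M_{12}$ terms cancel and one is left with $W[\psi_{k-},\psi_{k+}]=2ik\,M_{22}(k)$. (As a cross-check, the same computation carried out as $x\to-\infty$ using (\ref{psi-plus-2}) gives $W=2ik\,M_{22}(k)$ as well, consistent with $x$-independence of the Wronskian and with $\det\bM=1$.) Since $k_\star>0$, the factor $2ik_\star$ is nonzero, so $W[\psi_{k_\star-},\psi_{k_\star+}]=0$ if and only if $M_{22}(k_\star)=0$; and the Jost solutions are linearly dependent if and only if their Wronskian vanishes. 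This is exactly the claimed equivalence.

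The only genuinely delicate point — and the step I would flag as the main obstacle — is justifying that the asymptotic relations (\ref{psi-plus-2})--(\ref{psi-minu2}) may be differentiated term by term and that the resulting error terms contribute nothing to the limit defining the Wronskian. This is standard once one knows, from the integral-equation (Volterra) construction of the Jost solutions under hypothesis (\ref{condi-1}), that $\psi_{k\pm}(x)-(\text{leading exponentials})\to 0$ together with the corresponding derivative estimates as $x\to\pm\infty$; alternatively, one avoids the issue entirely by noting that $W[\psi_{k-},\psi_{k+}]$ is a constant and may therefore be evaluated as a limit, in which limit only the leading exponential coefficients (which are precisely the $M_{ij}$) matter. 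I would phrase the argument in this second way to keep it clean. The remainder is the short algebraic simplification above, which I would not belabor.
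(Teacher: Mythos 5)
Your proof is correct, and it rests on the same asymptotic formulas (\ref{psi-plus-2}) and (\ref{psi-minu2}) as the paper's, but it detects linear dependence by a different mechanism. The paper's own proof is a one-line coefficient comparison: since those two equations determine $\psi_{k\pm}$ uniquely, $\psi_{k_\star-}$ can be proportional to $\psi_{k_\star+}$ (whose $x\to+\infty$ asymptotics contain no $e^{-ik_\star x}$ term) if and only if the coefficient $M_{22}(k_\star)$ of $e^{-ik_\star x}$ in the $x\to+\infty$ asymptotics of $\psi_{k_\star-}$ vanishes. You instead encode linear dependence in the Wronskian and establish the identity $W[\psi_{k-},\psi_{k+}]=2ik\,M_{22}(k)$, which in view of (\ref{M=RT}) is the standard scattering-theory relation $W=2ik/T(k)$. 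Your route buys a cleaner converse direction---vanishing of a single $x$-independent constant replaces the appeal to uniqueness of solutions with prescribed asymptotics---and you correctly flag the one analytic point both arguments tacitly rely on, namely that the asymptotic expansions control $\psi'$ as well as $\psi$; this follows from the Volterra construction of the Jost solutions under hypothesis (\ref{condi-1}) and is already encoded in the derivative conditions of (\ref{jost}). The paper's version is shorter; yours is more self-contained and is the form in which this equivalence is usually proved in the literature. I see no gaps.
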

    \begin{proof}
    $k_\star^2$ is a spectral singularity of $H$ whenever $\psi_{k_\star -}$ and $\psi_{k_\star +}$ are linearly dependent. According to (\ref{psi-plus-2}) and (\ref{psi-minu2}) and the fact that these equations determine $\psi_{k\pm}$ uniquely, this happens if and only if $M_{22}(k_\star)=0$.
    \end{proof}
Combining the statement of Theorem~\ref{thm4-1} with Eqs.~(\ref{M=RT}) yields the physical meaning of spectral singularities, namely that spectral singularities are the real and positive values of the energy $k_\star^2$ at which reflection and transmission amplitudes diverge \cite{prl-2009}. The latter is a characteristic property of resonances, for they satisfy the outgoing boundary conditions \cite{seigert}. As seen from (\ref{psi-plus-2}) and (\ref{psi-minu2}), for the cases that $k_\star$ corresponds to a spectral singularity and $\psi_{k_\star\pm}$ become linearly dependent, they also satisfy the outgoing boundary conditions.

The main distinction between the wave function for a resonance and the Jost solutions $\psi_{k_\star\pm}$ at a spectral singularity $k_\star^2$ is that, unlike the latter, the former satisfies the Schr\"odinger equation for a non-real value of $k^2$. Because the imaginary part of $k^2$ for a resonance determines its width, we can identify spectral singularities with the energies of certain zero-width resonances. Note, however, that spectral singularities determine genuine non-decaying scattering states with real and positive energy \cite{prl-2009}. This distinguishes them from the bound states in the continuum \cite{WV}. Although the latter are also associated with zero-width resonances, their wave function is a square-integrable solutions of the Schr\"odinger equation. For a discussion of other differences between spectral singularities and bound states in the continuum, see \cite{ap-2013}.

The fact that the reflection and transmission amplitudes and consequently the reflection and transmission coefficients $|R^{l/r}(k)|^2$ and $|T(k)|^2$ diverge for a resonance does not conflict with the well-known unitarity condition
    \be
    |R^{l/r}(k)|^2+|T(k)|^2=1,
    \label{unitary}
    \ee
because the $k$-value for a resonance is not real. For a spectral singularity, $k$ is real and (\ref{unitary}) is violated. This provides a simple proof of the following result.
    \begin{thm}
    Real potentials cannot support a spectral singularity.
    \end{thm}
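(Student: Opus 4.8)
The plan is to argue by contradiction, using the transfer-matrix characterization of spectral singularities (Theorem~\ref{thm-ss}) together with unitarity of scattering for real potentials. Suppose $v:\R\to\R$ satisfies (\ref{condi-1}) and that $k_\star^2>0$ is a spectral singularity of the associated Schr\"odinger operator $H$. By Theorem~\ref{thm-ss} this forces $M_{22}(k_\star)=0$, and since $M_{22}=1/T$ by (\ref{M=RT}), the transmission amplitude $T(k_\star)$ diverges; equivalently, the reflection/transmission coefficients $|R^{l/r}(k_\star)|^2$ and $|T(k_\star)|^2$ cannot both be finite. This is to be contradicted by the unitarity relation (\ref{unitary}), $|R^{l/r}(k)|^2+|T(k)|^2=1$, which holds for every real $k$ when $v$ is real: it is the statement of conservation of probability current and follows from constancy of the Wronskian of a scattering solution with its complex conjugate (when $v$ and $k$ are real, $\psi^*$ solves (\ref{sch-eq}) with the same $k^2$, so the current at $x\to-\infty$ equals that at $x\to+\infty$). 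In particular (\ref{unitary}) gives $|T(k)|\le 1$ for all real $k$, so $T$ never diverges on the real axis, contradicting $T(k_\star)=\infty$.

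To make this completely self-contained — and to avoid any delicacy about whether $T$ is \emph{a priori} finite at a candidate spectral singularity — I would instead work directly with the transfer matrix, which by (\ref{asymp-gen}) and $\det\bM(k)=1$ is well defined for every real $k$. Reality of $v$ and of $k$ implies that if $\psi$ is a solution of (\ref{sch-eq}) with asymptotic data $(A_\pm,B_\pm)$ as in (\ref{asymp-gen}), then $\psi^*$ is a solution with the same $k^2$ and data $(B_\pm^{\,*},A_\pm^{\,*})$. Writing this out gives $\sigma\,\bM(k)\,\sigma=\overline{\bM(k)}$ with $\sigma=\begin{bmatrix}0&1\\1&0\end{bmatrix}$, i.e.\ $M_{11}=\overline{M_{22}}$ and $M_{12}=\overline{M_{21}}$. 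Combining with $\det\bM(k)=1$ yields $|M_{22}(k)|^2-|M_{21}(k)|^2=1$, hence $|M_{22}(k)|^2=1+|M_{21}(k)|^2\ge 1$, so $M_{22}(k)\neq 0$ for all real $k$. By Theorem~\ref{thm-ss}, $H$ has no spectral singularity. (Note that under (\ref{M=RT}) the identity $|M_{22}|^2-|M_{21}|^2=1$ is exactly (\ref{unitary}), so the two arguments are the same one dressed differently.)

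The only point requiring genuine care is the bookkeeping of the reality constraint on $\bM(k)$ — equivalently, the justification of (\ref{unitary}) at \emph{all} real energies, including the putative singular one. One must verify that the Wronskian/transfer-matrix symmetry $\sigma\bM(k)\sigma=\overline{\bM(k)}$ is an identity holding for every choice of asymptotic data, so that it persists precisely at the energies where $T$ would otherwise blow up; this is where the hypothesis that $v$ is real enters in an essential way. Everything else is immediate from Theorem~\ref{thm-ss} and the relations (\ref{M=RT}).
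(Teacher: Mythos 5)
Your proof is correct and follows essentially the same route as the paper, which likewise deduces the result from the unitarity relation $|R^{l/r}(k)|^2+|T(k)|^2=1$ holding at all real $k$ for real potentials and being violated at a spectral singularity. Your transfer-matrix reformulation, $\sigma_1\bM(k)\sigma_1=\bM(k)^*$ together with $\det\bM(k)=1$ giving $|M_{22}(k)|^2=1+|M_{21}(k)|^2\ge 1$, is a welcome tightening of the same argument, since it sidesteps the question of whether $T$ is even defined at the putative singular energy.
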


In the standard formulation of quantum mechanics, the Hamiltonian operator $H$ is required to be Hermitian and the potential functions $v$ are necessarily real-valued. Therefore, they do not display spectral singularities. The same applies to the pseudo-Hermitian representation of quantum mechanics \cite{review} where $H$ may not be Hermitian but Hermitizable. This is because the presence of a spectral singularities obstructs the existence of a metric operator that achieves the Hermitization process. However, complex scattering potentials have a number of applications in other areas of physics. The primary example is the optical potentials used in modeling optically active material. This is the arena in which the role and implications of spectral singularities have so far been studied. We devote the next section to a brief description of the optical realizations of spectral singularities.

\section{Spectral Singularities in Optics}

Consider an isotropic charge-free linear medium whose electromagnetic properties changes along one direction, that we take to be the $x$-axis in a Cartesian coordinate system. We can encode these properties in the definition of the refractive index of the medium $\bn(x)$ which is a generally complex quantity. Suppose that we are interested in the propagation of a linearly polarized time-harmonic electromagnetic wave in this medium. If we choose our $y$-axis along the polarization direction, we can express the electric field in the form $\vec E(\vec r,t)=e^{-i\omega t}\sE(\vec r)\hat e_y$, where $\vec r:=(x,y,z)$, $\omega$ is the angular frequency of the wave, $\sE(\vec r)$ is a solution of
    \be
    \left[\nabla^2+k^2\bn(x)^2\right]{\sE}(\vec r)=0,
    \label{Helm-eq}
    \ee
$\hat e_y$ is the unit vector along the positive $y$-axis, $k:=\omega/c$ is the wavenumber, and $c$ is the speed of light in vacuum \cite{Born-Wolf}.

Equation~(\ref{Helm-eq}) admits solutions depending only on $x$; ${\sE}(\vec r)=\psi(x)$. In view of (\ref{Helm-eq}), $\psi(x)$ satisfies the Schr\"odinger equation~(\ref{sch-eq}) corresponding to the potential
    \be
    v(x):=k^2[1-\bn(x)^2].
    \label{opt-potential}
    \ee
If the medium is confined to a compact region in empty space, $\bn(x)=1$ for sufficiently large values of $|x|$. This together with the fact that $\bn$ is a complex-valued function imply that $v(x)$ is a (finite-range) complex scattering potential. Therefore, optical potentials (\ref{opt-potential}) provide a fertile ground for the investigation of the physical implications of spectral singularities. Ref.~\cite{prl-2009}, which offers the first such investigation, explores spectral singularities in a medium described by an optical potential of the form
(\ref{pt-barrier}).

The physical meaning of these spectral singularities is more easily understood for a simpler model that consists of a homogeneous optically active infinite planar slab of length $L$ placed in vacuum
\cite{pra-2009,pra-2011}. This corresponds to a complex barrier potential,
     \be
    v(x)=\left\{\begin{array}{ccc}
    \fz & {\rm for} & |x|\leq L/2,\\
    0 & {\rm for} & |x|> L/2,\end{array}\right.\hspace{1cm} \fz:=k^2(1-\fn^2),
    \label{barrier-potential}
    \ee
where $\fn$ stands for the refractive index of the slab.

Inside the slab, where $|x|\leq L/2$, the Schr\"odinger equation (\ref{sch-eq})  admits a solution of the form $\psi(x)=\sE_0 e^{ik\fn (x+L/2)}$, where $\sE_0$ is a constant. This corresponds to a right-going plane wave
    \[\vec E(\vec r,t)=\sE_0 e^{i[k\fn(x+L/2)-\omega t]}\hat e_y,\hspace{.5cm} |x|\leq L/2.\]
If we use $\eta$ and $\kappa$ to respectively denote the real and imaginary parts of $\fn$, so that $\fn=\eta+i\kappa$, we find
    \[|\vec E(\vec r,t)|^2=|\sE_0|^2 e^{-k\kappa (2x+L)},\hspace{.5cm} |x|\leq L/2.\]
In particular, as the wave travels through the slab, its intensity changes from $|\sE_0|^2$ to  $|\sE_0|^2 e^{-2k\kappa L}$, i.e., it undergoes an exponential loss or gain of intensity by a factor of $e^{-2k\kappa L}$ depending on whether $\kappa>0$ or $\kappa<0$. Because of this a medium that has a positive (respectively negative) value for $\kappa$ is called a lossy (respectively gain) medium. The factor $2k|\kappa|$ that determines the amount of the exponential loss (gain) per unit distance traversed by the wave is called the attenuation (respectively gain) coefficient. In terms of the wavelength, $\lambda:=2\pi/k$, this quantity takes the form $4\pi|\kappa|/\lambda$. In particular, the {\em gain coefficient} is given by \cite{silfvast}
	\be
	g:=-\frac{4\pi\kappa}{\lambda}.
	\label{gain-coef}
	\ee
	
Because the complex barrier potential (\ref{barrier-potential}) is exactly solvable, we can easily determine its transfer matrix and explore its spectral singularities.  This is done in Refs.~\cite{pra-2009,pra-2011}. Here we suffice to state that the relation $M_{22}(k_\star)=0$, which determines the spectral singularities $k_\star^2$ whenever $k_\star\in\R^+$, reduces to the following complex transcendental equation \cite{pra-2011}.
	\be
	e^{-2i\fn k_\star L}=\left(\frac{\fn-1}{\fn+1}\right)^2.
	\label{q001}
	\ee
The right-hand side of this relation is a well-known quantity in optics called the {\em reflectivity} $\cR$. If we compute the modulus (absolute-value) of both sides of (\ref{q001}) and use (\ref{gain-coef}) in the resulting expression, we obtain \cite{pra-2011}
	\be
	g=\frac{1}{2L}\ln \frac{1}{|\cR|^2}.
	\label{q002}
	\ee
This equation that is a consequence of the existence of a spectral singularity is one of the basic relations of laser physics known as the laser threshold condition \cite{silfvast}. The right-hand side of (\ref{q002}) is the minimum amount of gain necessary for a (mirrorless) slab laser to begin emitting laser light. It is called the threshold gain coefficient.

Every laser amplifies the background noise to sizable intensities and emits it as coherent electromagnetic radiation. This is precisely what a spectral singularity does, because it leads to infinite reflection and transmission coefficients that are capable of amplifying extremely week background electromagnetic waves to considerable intensities. The fact that the waves emitted from both sides of a slab laser have the same intensity and phase (are coherent) also follows from (\ref{q001}). This is indeed a general property of spectral singularities, because they are invariant under the space reflection (parity) $\cP$. Under $\cP$ the transfer matrix $\bM(k)$ of every scattering potential transforms as
    \be
    \bM(k) \stackrel{\cP}{\longleftrightarrow} \mathbf{\sigma}_1\bM(k)^{-1}\mathbf{\sigma}_1,
    \label{P-trans-M}
    \ee
where ${\sigma}_1 $ is the first Pauli matrix, i.e., the $2\times 2$ matrix with zero diagonal and unit off-diagonal entries, \cite{longhi-2010,jpa-2012}. According to (\ref{P-trans-M}), $M_{22}(k)$ is $\cP$-invariant. Therefore, the same holds for the spectral singularities that are given by the real and positive zeros of $M_{22}(k)$.

In Ref.~\cite{prl-2013}, we develop a nonlinear generalization of spectral singularities that apply to nonlinearities that are confined in space (have compact support.) It turns out that the mathematical relation describing these nonlinear spectral singularities for the above simple slab model supplemented with a weak Kerr nonlinearity yields an equation relating the output intensity $I$ of the slab laser to its gain coefficient \cite{pra-2013c}. For a typical optical gain medium \cite{silfvast}, which satisfies $|\kappa|\ll 1<\eta$, this equation takes the following form.
	\be
	I=\frac{f(\eta)(g-g_{th})}{\sigma\,g_{th}},
	\label{q003}
	\ee
where $f$ is a real-valued function taking strictly positive values, $g$ is the gain coefficient (\ref{gain-coef}),  $g_{th}$ is the threshold gain coefficient that is given by the right-hand side of (\ref{q002}), and $\sigma$ is the Kerr coefficient which, for generic gain media, takes small but positive values.

Because $f(\eta)>0$, $\sigma>0$, and $I\geq 0$, Eq.~(\ref{q003}) implies that there is no power emitted from a slab laser unless we have $g>g_{th}$, and for $g>g_{th}$ the intensity of emitted wave increases linearly as a function of $g-g_{th}$. Both of these statements are among the basic results of the physics of lasers. Here they follow as logical consequences of the purely mathematical condition of the existence of a nonlinear spectral singularity. Let us also mention that (\ref{q003}) has a more general domain of validity. In Ref.~\cite{SAP-2014}, we explore the consequences of the emergence of nonlinear spectral singularities for a weakly nonlinear $\cP\cT$-symmetric bilayer slab. This consists of a pair of adjacent infinite homogeneous planar slabs with complex-conjugate refractive index, $\eta\pm i\kappa$, so that one's gain (loss) is balanced by the other's loss (gain)\cite{jpa-2012,pra-2003a}. The laser output intensity computed using the condition of the appearance of a nonlinear spectral singularity is also given by (\ref{q003}), albeit with a different choice for the function $f$, \cite{SAP-2014}.
	
Another interesting development having its root in optical spectral singularities is the discovery of perfect coherent absorbers (CPA) which are also called antilasers \cite{antilaser1,antilaser2,longhi-2010,longhi,antilaser3}. These are optical devices that function as time-reversed lasers, i.e., they completely absorb coherent electromagnetic waves.

Under the time-reversal transformation (\ref{P-T=}), scattering potentials $v(x)$ and their transfer matrix $\bM(k)$ transform according to
    \begin{align}
    & v(x)\stackrel{\cT}{\longleftrightarrow}v(x)^*, &&
    \bM(k) \stackrel{\cT}{\longleftrightarrow} \mathbf{\sigma}_1\bM(k)^{*}\mathbf{\sigma}_1.
    \label{M-T-trans}
    \end{align}
In light of these relations, the time-reversal transformation $\cT$ converts an optical potential (\ref{opt-potential}) describing a gain media into that of a lossy medium, and induces the transformation:
    \[M_{11}(k)\stackrel{\cT}{\longleftrightarrow}M_{22}(k)^*.\]
This, in particular, means that the spectral singularities of $v(x)$ correspond to the real values of the wavenumber $k$ at which the $M_{11}(k)   $ entry of the transfer matrix of the time-reversed potential, $v(x)^*$, vanishes. At this wavenumber the optical system modeled by $v(x)^*$ serves as a CPA. In other words, CPA action is a realization of the spectral singularities of the time-reversed (complex-conjugate) optical potential \cite{longhi-2010,jpa-2012}.

\section{Concluding Remarks}

Spectral singularities were introduced by Naimark about sixty years ago and have since become a subject of research in operator theory. Given their interesting mathematical implications, it is quite surprising that their relevance to scattering theory and their physical meaning could not be understood earlier than in 2009. It turns out that the optics of gain media offers various physical models in which this concept can be realized. The study of the optical realization of spectral singularities shows that they form a mathematical basis for lasers. This observation could be made much earlier, had the optical physicists knew about spectral singularities. Indeed, the solution of the wave equations with outgoing boundary conditions, which leads to spectral singularities for real wavenumbers, has been employed in laser theory previously \cite{TSC}.

The discovery of the physical aspects of spectral singularities has boosted interest in their study particularly among physicists. During the past five years there have appeared a number of research publications on the subject. The following is a list of those that we did not elude to above.
    \begin{itemize}
    \item[-] Refs.~\cite{sokolov,samsonov,correa,chaos-cador} address some of the formal and conceptual aspects of the subject.
    \item[-] Refs.~\cite{jpa-2009,ahmed1,jpa-2011,jpa-2012,sinha} explore specific toy models supporting spectral singularities.
    \item[-] Refs.~\cite{pra-2011b,pra-2012} study the application of semiclassical approximation and perturbation theory for determining spectral singularities of non-homo\ geneous gain media with planar symmetry.
    \item[-] Refs.~\cite{pla-2010-prsa-2012,pra-2014b-pra-2014d} examine the optical spectral singularities in spherical and cylindrical geometries. In particular, \cite{pra-2014b-pra-2014d} offers a detailed and careful treatment of spectral singularities in the whispering gallery modes. These correspond to the cylindrical and spherical lasers.
    \item[-] Refs.~\cite{longhi-2009,li} discuss some of the applications of spectral singularities in condensed matter physics.
    \item[-] Refs.~\cite{liu,reddy} consider spectral singularities in certain optically active waveguides and elaborate on their regularization due to the presence of nonlinearities.
    \item[-] Ref.\cite{aalipour} offers an extension of the analysis of \cite{pra-2011} to waves with a non-normal incidence angle.
    \item[-] Refs.~\cite{ap-2014,pra-2014b,Garsia-calderon-2014} are some other publications that discuss spectral singularities.
    \end{itemize}

The recent development of a nonlinear generalization of spectral singularities \cite{prl-2013} has opened the way towards applications of this concept in the vast territory of nonlinear waves. The fact that the simple applications in effectively one-dimensional optical systems yield a mathematical derivation of the known behavior of the laser output intensity provides ample motivation for further study of nonlinear spectral singularities in other areas of physics.

\subsection*{Acknowledgment} I would like to express my gratitude to the organizers of the XXXIII Workshop on Geometric Methods in Physics, in particular Piotr Kielanowski, for their hospitality during this meeting. I am also indebted to Hamed Ghaemi-\ dizicheh for helping me locate and correct the typos in the first draft of the manuscript. This work has been supported by  the Scientific and Technological Research Council of Turkey (T\"UB\.{I}TAK) in the framework of the project no: 112T951, and by the Turkish Academy of Sciences (T\"UBA).

\end{document}